\newtheorem{theorem}{Theorem}[subsubsection]
\newtheorem{definition}{Definition}
\theoremstyle{remark}
\numberwithin{equation}{section}
\numberwithin{definition}{section}
\numberwithin{theorem}{section}
\numberwithin{figure}{section}
\DeclareSymbolFont{bbold}{U}{bbold}{m}{n}
\DeclareSymbolFontAlphabet{\mathbbold}{bbold}
\begin{document}
\title{Absence of Binding in the Nelson and\\Piezoelectric Polaron Models}
\author{Gonzalo A. Bley}
\affil{Institut for Matematik, Aarhus Universitet, \authorcr Ny Munkegade 118,  8000 Aarhus C, Denmark}
\maketitle
\begin{abstract}
In the context of the massless Nelson model, we prove that two non-relativistic nucleons interacting with a massless meson field do not bind when a sufficiently strong Coulomb repulsion between the nucleons is added to the Hamiltonian. The result holds for both the renormalized and unrenormalized theories, and can also be applied to the so-called piezoelectric polaron model, which describes an electron interacting with the acoustical vibrational modes of a crystal through the piezoelectric interaction. The result can then be interpreted as well as a no-binding statement about piezoelectric bipolarons. The methods used allow also for a significant reduction of about 30\% over previously known no-binding conditions for the optical bipolaron model of H. Fr\"{o}hlich.
\end{abstract}
\begin{section}{Introduction}
The intent of this article is to provide a simple proof that two particles attracting each other via an effective massless Nelson-model interaction do not bind when they repel one another through a strong enough Coulomb repulsion. Even though in the Nelson model the particles involved are nucleons (and so need not repel each other), there is a model of two electrons interacting with the acoustical vibrational modes of a piezoelectric crystal, the so-called piezoelectric bipolaron, that has exactly the same Hamiltonian as the massless Nelson model for two particles, with the addition of a Coulomb repulsion term. For this reason our result is of direct physical significance. In the following subsection we shall describe the models involved, while providing references for them.
\begin{subsection}{The Models Involved}
\label{subsection.models.involved}
\begin{subsubsection}{The Massless Nelson Model}
The massless Nelson model describes the interaction of a system of $N$ non-relativistic nucleons with a quantized, massless meson field. Its Hamiltonian reads as
\begin{gather}
H = -\sum_{n = 1}^N\frac{\Delta_n}{2} + \int_{\mathbb{R}^3}\chi_{\Lambda}(k)|k|a_k^{\dagger}a_k\,dk + \sqrt{\alpha}\sum_{n = 1}^N\int_{\mathbb{R}^3}\frac{\chi_{\Lambda}(k)}{\sqrt{|k|}}\left(e^{ikx_n}a_k + e^{-ikx_n}a_k^{\dagger}\right)\,dk,
\label{equation.hamiltonian.nelson.model}
\end{gather}
acting on $L^2(\mathbb{R}^{3N})\otimes\mathcal{F}$, where $\mathcal{F}$ is the Fock space on $L^2\left(\mathbb{R}^3\right)$. $a_k$ and $a_k^{\dagger}$ denote the annihiliation and creation operators for the $k$-th mode of the field, respectively, and satisfy the relations $[a_k, a_{k'}^{\dagger}] = \delta\left(k - k'\right)$ and $\left[a_k, a_{k'}\right] = [a_k^{\dagger}, a_{k'}^{\dagger}] = 0$. $\alpha$ is a non-negative number, the coupling constant of the interaction between the nucleons and the meson field; $\Lambda$ is also a non-negative number, and it acts as an ultraviolet cutoff; and $\chi_{\Lambda}$ is the indicator function of the ball of radius $\Lambda$ centered at the origin. $\Delta_n$ is simply the part of the full $3N$-Laplacian corresponding to the particle $n$, namely if $X = (x_1, x_2, \ldots, x_N)$ is the position vector of the particles (a $3N$-dimensional vector), and if $x_n = (x_n^1, x_n^2, x_n^3)$, then $\Delta_n = \partial^2/\partial^2 x_n^1 + \partial^2/\partial^2 x_n^2 + \partial^2/\partial^2 x_n^3$. The model is due to Edward Nelson, who presented it for the first time in 1964 in \cite{N1}, and subsequently in \cite{N2}. The most noteworthy property of the Hamiltonian \eqref{equation.hamiltonian.nelson.model} is that it requires renormalization if one is to make sense of it when $\Lambda = \infty$: even though $H$ is self-adjoint and bounded-below when $\Lambda$ is finite, the ground-state energy of $H$ goes to negative infinity as $\Lambda \to \infty$. However, if one adds a term, that we denote by $Q$ and define as
\begin{gather}
\alpha N\int_{\mathbb{R}^3}\frac{\chi_{\Lambda}(k)}{|k|^2\left(|k|/2 + 1\right)}\,dk = 8\pi\alpha N\log\left(1 + \Lambda/2\right),
\label{equation.renormalizing.term}
\end{gather}
then one obtains a self-adjoint, bounded below operator $\widehat{H}$ in the limit; more precisely, for every real $t$, $e^{it(H_{\Lambda} + Q_{\Lambda})}$ goes to $e^{it\widehat{H}}$ strongly as $\Lambda$ goes to infinity. The aforementioned properties of \eqref{equation.hamiltonian.nelson.model} were observed by Nelson \cite{N2}. We would like to mention as well that an additional infrared cutoff $\left\{ k \in \mathbb{R}^3 : |k| \geq \mu\right\}$, on top of $\chi_{\Lambda}(k)$, is required for $H$ to have a ground state \cite{LMS}, even though this fact will not be relevant for us in the present article. We are interested here in the case $N = 2$.
\end{subsubsection}
\begin{subsubsection}{The Piezoelectric Polaron}
The piezoelectric polaron describes the interaction of an electron with the acoustical vibrational modes of a piezoelectric crystal. It is usually attributed to R.A. Hutson \cite{H}, and to G.D. Mahan and J.J. Hopfield \cite{MH}. Its Hamiltonian is the same as that of the massless Nelson model,
\begin{gather}
H^1 \equiv -\frac{\Delta}{2} + \int_{\mathbb{R}^3}\chi_{\Lambda}(k)|k|a_k^{\dagger}a_k\,dk + \sqrt{\alpha}\int_{\mathbb{R}^3}\frac{\chi_{\Lambda}(k)}{\sqrt{|k|}}\left(e^{ikx}a_k + e^{-ikx}a_k^{\dagger}\right)\,dk,
\label{equation.piezoelectric.polaron}
\end{gather}
where now $\Lambda$ is kept fixed at a positive value, the so-called Debye wave number \cite[Page 430, Footnote 6]{L}. $\alpha$ is defined in terms of quantities that depend on the crystal in question (such as the speed of sound); see \cite{TW, WGT} for a precise definition of this constant, and also \cite{RW, WP} to gain a better understanding of the model. We will study here the case of two electrons in a piezoelectric crystal, whose Hamiltonian follows directly from \eqref{equation.piezoelectric.polaron},
\begin{gather}
H_A^2 \equiv -\sum_{n = 1}^2\frac{\Delta_n}{2} + \int_{\mathbb{R}^3}\chi_{\Lambda}(k)|k|a_k^{\dagger}a_k\,dk + \sqrt{\alpha}\sum_{n = 1}^2\int_{\mathbb{R}^3}\frac{\chi_{\Lambda}(k)}{\sqrt{|k|}}\left(e^{ikx_n}a_k + e^{-ikx_n}a_k^{\dagger}\right)\,dk + \frac{A}{|x_1 - x_2|}.
\label{equation.piezoelectric.bipolaron}
\end{gather}
Physically $A$ is, after fully restoring units, $e^2/(4\pi\varepsilon)$, where $e$ is the electron charge and $\varepsilon$ is the permittivity of the medium the electrons are in (not the vacuum, but a crystal), but it will be more transparent for us to simply treat $A$ as a fixed constant that can take any non-negative value. Here one may treat the two electrons as fermions or simply not impose any symmetry on them -- our final result will be valid in both cases.
\end{subsubsection}
\begin{subsubsection}{The Optical Polaron}
The last model we will have the opportunity to discuss is that of the optical polaron of H. Fr\"{o}hlich \cite{Fr2, Fr}. It describes the interaction of a single non-relativistic electron with the optical vibrational modes of a crystal lattice. Its Hamiltonian is similar to the ones just described, the main difference being that there is no ultraviolet cutoff,
\begin{gather}
-\frac{\Delta}{2} + \int_{\mathbb{R}^3}a_k^{\dagger}a_k\,dk + \frac{\sqrt{\alpha}}{2^{3/4}\pi}\int_{\mathbb{R}^3}\frac{1}{|k|}\left(a_k e^{ikx} + a_k^{\dagger}e^{-ikx}\right)\,dk.
\label{equation.polaron.hamiltonian}
\end{gather}
Even though the model was first devised by H. Fr\"{o}hlich, many people after him provided critical contributions to its understanding. One of them was R. Feynman, who in 1955 \cite{F} provided a new interpretation of the interaction of the electron with the lattice through the use of functional integrals. In particular, the functional integral analysis reveals a fact that is not visible at the level of the operator \eqref{equation.polaron.hamiltonian}, which is that the electron, roughly speaking, is attracted to its own past history via a Coulomb interaction with coupling constant $\alpha$. Relevant here as well are works by T.D. Lee, F.E. Low, D. Pines \cite{LLP, LP}, and M. Gurari \cite{G}, where a variational principle due to S. Tomonaga \cite{T} was used to obtain a power series expansion of the ground-state energy of the polaron in terms of the total momentum of the system. (In reality they obtained an upper bound, since they used a trial state in their analysis.) We would like also to mention perturbation-theoretic calculations for the polaron due to E. Haga \cite{Ha}, and a work by E.H. Lieb and K. Yamazaki, where a rigorous lower bound to the polaron energy was found \cite{LY}. These are all early works from the 50's. Important for us in the present article is a more recent work from R.L. Frank, E.H. Lieb, R. Seiringer, and L.E. Thomas \cite{FLST} where, in particular, a no-binding condition for the optical bipolaron was found. We will refer to this last work repeatedly during the rest of the paper. It provides, in particular, good references on more recent works on the model. The Hamiltonian \eqref{equation.polaron.hamiltonian} is to be interpreted as the norm-resolvent limit of the corresponding operator with an ultraviolet cutoff in the interaction \cite[Discussion immediately above (1.2)]{GM}, and is the form used in the aforementioned work of Frank et al. \cite{FLST}.
\end{subsubsection}
\end{subsection}
\begin{subsection}{Methods of Proof}
In the following discussion, we shall refer to the piezoelectric polaron model. A simple argument involving functional integrals, which we will elaborate on later, shows that the interaction of the two electrons with each other when the repulsion parameter $A$ is equal to $0$, mediated through their coupling with the acoustic phonon field, is attractive and ``retarded Coulomb-like'' (the precise meaning of which we shall see later on) with strength $\alpha$ -- up to constants. If $A$ is big enough (or, equivalently, if $\alpha$ is small enough), one would expect that in the minimum energy configuration the electrons would be pushed very far apart by the mutual Coulomb repulsion (which would overcome, by virtue of its strength, the attraction created by the interaction of the particles with the field), and that the electrons would be left interacting by themselves with their own local cloud of excitations of the phonon field. In particular, one would expect that there would be no binding between the two polarons, meaning that the ground-state energy of the entire system of two electrons immersed in the crystal would be equal to twice the ground-state energy of the Hamiltonian above for one piezoelectric polaron, Equation \eqref{equation.piezoelectric.polaron}.

Exactly how much bigger $A$ would have to be in relation to $\alpha$ for there not to be binding would depend on the intricate and exact nature of the attraction between the two electrons; and since this is by no means a simple attraction, we were merely able to prove that if $A \geq C(\Lambda)\alpha$, where $C(\Lambda)$ is an explicit but diverging function of $\Lambda$, then the ground-state energy of the full Hamiltonian \eqref{equation.piezoelectric.bipolaron} is twice the ground-state energy of the corresponding single-particle Hamiltonian \eqref{equation.piezoelectric.polaron}. We certainly do not endeavor into making any statements as to the sharpness of this relation, $A \geq C(\Lambda)\alpha$, since we simply do not know what happens if $A$ is smaller than $C(\Lambda)\alpha$. Even though our function $C(\Lambda)$ goes to $\infty$ as $\Lambda \to \infty$, this is not a problem for the piezoelectric polaron, since one keeps $\Lambda$ finite, but it is an issue for the Nelson model, as in that case one does take $\Lambda$ to $\infty$ in order to renormalize it. We relegate a no-binding result for the Nelson model to the last section of this article, and the reason why we leave it for last is that we do not get what one would like, or expect -- we obtain a non-linear relationship between $A$ and $\alpha$. This is most likely an artifact; see Section \ref{section.polaron.nelson}.

The method of proof of the results above involves three ingredients: First, a partition of unity of the configuration space $\mathbb{R}^6$ of the position of the two electrons, which allows local estimates to be made. The partition is an adaptation to the piezolectric bipolaron of an argument of Frank, Lieb, Seiringer, and Thomas \cite{FLST}. Second, a refined lower bound for the ground-state energy of two particles interacting with each other and with a quantum field through the Nelson interaction (without repelling each other), given in a recent paper of the author \cite{B}, as well as an upper bound for the corresponding 1-particle model, provided in the present article; and third, the observation that the massless Nelson interaction is essentially Coulomb when the interparticle distance is localized, which becomes apparent at the functional integral level -- see Section \ref{section.second.localization}.

Regarding the first ingredient, a double partition of unity was performed in \cite{FLST}, where the interparticle distance was first split using a single length scale that is then raised to ever higher powers for large distances, in order to control the localization error; a second partition for the entire space $\mathbb{R}^6$ was then made using two movable balls of fixed radius, in order to localize the electrons even further in their own boxes. In total, they localize, so to speak, 7 degrees of freedom (the interparticle distance, which is a one-dimensional object, plus the center of each ball, gives a total of 7), which is obviously an ``overlocalization'' (there being only 6 spatial degrees of freedom), and one would expect to be able to solve the problem without localizing so much. Indeed, we show how one can make do with just localizing one of the balls, and not the two of them. Thus, in total, we localize just 4 degrees of freedom. All the partitions are then completely optimized using a reduction of the resulting infinite dimensional minimization problem to a low-dimensional one, as in \cite{BB}.

When our methods are applied to the optical bipolaron model of H. Fr\"{o}hlich -- the obvious extension of \eqref{equation.polaron.hamiltonian} to 2 electrons -- we obtain a significant improvement over previous results on the no-binding of electrons in this model, namely the passage from $A \geq 36.9\alpha$ \cite{BB} to $A \geq 25.9\alpha$. (In \cite{FLST} the condition was $A \geq 37.7\alpha$.) See Section \ref{section.polaron.nelson}. (In \cite{BB} and \cite{FLST} the no-binding conditions appear with different numbers because $p^2$ was used for the kinetic energy instead of $p^2/2$.)

We would like to finish this subsection by pointing out a result known as the subadditivity of the energy. For our purposes here, it merely states that $\text{inf spec }H_A^2 \leq 2 \text{ inf spec }H^1$, where $H^1$ and $H_A^2$ have been defined in the previous subsection. This result holds even if one treats the electrons as fermions (meaning that the infimum on the left is on anti-symmetric functions). The proof is a careful execution of what we explained in the first paragraph of this subsection; the idea is to separate the two electrons as much as possible, so that they essentially interact just with their own phonon cloud. See \cite[Section 14.2]{LS} and the references in \cite[Theorem 1.4 and its proof]{GM} for more information. What is missing to prove absence of binding is then the other inequality $\text{inf spec }H_A^2 \geq 2 \text{ inf spec }H^1$, and that is what this article provides. One then gets that the binding energy, $2\text{ inf spec }H^1 - \text{inf spec } H_A^2$, is zero. The same arguments hold for the optical polaron model (and the Nelson model).
\end{subsection}
\begin{subsection}{Main Results}
Even though the results in the article have been already hinted at in the previous subsection, we shall state them now precisely, for the convenience of the reader. We shall start by formally defining a concept we have been alluding to repeatedly.
\begin{definition}[Absence of binding]
Let $E_1$ be the ground-state energy of any of the 3 models mentioned in Subsection \ref{subsection.models.involved} when only one particle is present, and $E_2(A)$ be the corresponding ground-state energy when there are 2 particles and an additional Coulomb-repulsion term $A/|x_2 - x_1|$ is considered. We say that there is no binding if the binding energy $2E_1 - E_2(A)$ of the two-particle system is zero.
\end{definition}
Intuitively, this concept means that the minimum energy configuration of the two-particle system is that of two particles infinitely separated from one another, interacting with their own local cloud of excitations of the quantum field.
\begin{theorem}[Piezoelectric polaron model]
In the case of the piezoelectric bipolaron model, Hamiltonian \eqref{equation.piezoelectric.bipolaron}, there is an explicit positive function of the cutoff $\Lambda$, $C$, such that there is no binding if $A \geq C(\Lambda)\alpha$.
\label{theorem.main.result.piezoelectric.bipolaron}
\end{theorem}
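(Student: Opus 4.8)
The plan is to supply the inequality that, together with the subadditivity bound $\inf\operatorname{spec} H_A^2 \le 2\inf\operatorname{spec} H^1$ recalled above, yields absence of binding, namely $\inf\operatorname{spec} H_A^2 \ge 2\inf\operatorname{spec} H^1$. Write $E_1 = \inf\operatorname{spec} H^1$ and $E_2(A) = \inf\operatorname{spec} H_A^2$. First I would introduce a partition of unity $\{J_\nu\}$ of the two-electron configuration space $\mathbb{R}^6$, $\sum_\nu J_\nu^2 \equiv 1$, of the two-scale type adapted from \cite{FLST}: the interparticle distance $r = |x_1 - x_2|$ is cut into an innermost region $r \lesssim R_0$ and a sequence of concentric shells whose widths grow geometrically with their radius $R_0 < R_1 < R_2 < \cdots$, so that the localization error they carry decays fast enough at large $r$; and, in each such radial piece, the position of one of the two electrons is further localized into balls of a fixed radius $\ell$, so that in total only four degrees of freedom are localized rather than the seven of \cite{FLST}. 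Since the field and Coulomb terms of $H_A^2$ are multiplication operators in $X = (x_1, x_2)$ and commute with every $J_\nu$, the IMS formula gives $H_A^2 = \sum_\nu J_\nu H_A^2 J_\nu - \tfrac12\sum_\nu (|\nabla_1 J_\nu|^2 + |\nabla_2 J_\nu|^2)$, so it suffices to bound the local operator $H_A^2$ from below on the support of each $J_\nu$ by $2E_1$ plus (the supremum over that support of) the localization error.

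On a piece where $r \approx R$ within a shell of half-width $\delta$ and one electron is confined to a ball of radius $\ell$, I would bound the local energy from below as follows. The repulsion contributes at least $A/(R + \delta)$. The remaining Nelson part of $H_A^2$, restricted to this region, is handled by the refined two-particle lower bound of \cite{B} — which, combined with the one-particle upper bound established in the present article and with the functional-integral computation of Section \ref{section.second.localization}, shows that it is at least $2E_1$ minus a Coulomb-like quantity of strength $c(\Lambda)\alpha/R$, where $c(\Lambda)$ is explicit; the ball localization of one electron is what allows this comparison with the single-particle ground-state energy $E_1$. Thus the local bound at scale $R$ reads $2E_1 + A/(R+\delta) - c(\Lambda)\alpha/R - (\text{localization error at scale }R)$, and it remains to choose $R_0$, the geometric rate at which the shell widths grow, and $\ell$ so that the bracket $A/(R+\delta) - c(\Lambda)\alpha/R - (\text{error})$ is non-negative on every piece. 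Carrying out this optimization — reduced, following \cite{BB}, from an infinite-dimensional minimization to a low-dimensional one — yields a sufficient condition of the announced form $A \ge C(\Lambda)\alpha$, with $C(\Lambda)$ explicit and finite for each fixed $\Lambda > 0$. Particle statistics are never invoked, so the conclusion holds both with and without antisymmetry.

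The genuine obstacle, I expect, is the third ingredient: turning the heuristic that ``the massless Nelson interaction is essentially Coulomb once the interparticle distance is localized'' into a rigorous quantitative estimate. After the mesons are integrated out by the Feynman--Kac--Nelson formula the effective interaction between the two particles is not a potential but a retarded, double-time object, coupling $x_1(s)$ to $x_2(t)$ through the propagator $\chi_\Lambda(k) e^{-|k|\,|s-t|}$, whereas the partition of unity controls $|x_1 - x_2|$ only at equal times; one must therefore estimate the contribution of path configurations in which the two trajectories separate in time and show that what is left is still dominated by $c(\Lambda)\alpha/R$. Extracting the constant $c(\Lambda)$ explicitly — and seeing how the growth in $\Lambda$ of the final $C(\Lambda)$ emerges from the $\Lambda$-dependence of this comparison and of the localization bookkeeping — is where the real work lies; by contrast the partition-of-unity estimates and the final optimization are adaptations of the machinery of \cite{FLST} and \cite{BB}.
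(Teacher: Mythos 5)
Your proposal matches the paper's proof in all essentials: the radial partition of the interparticle distance into geometrically widening shells, the pinning of a single electron into a ball (so that only four degrees of freedom are localized, not seven), the IMS decomposition, the two-particle lower bound of \cite{B} paired with the one-particle upper bound for the innermost region, and the reduction of the resulting infinite-dimensional optimization to a low-dimensional one in the style of \cite{BB}. The one point you flag as the remaining obstacle --- that after integrating out the field the kernel couples $\omega_t^1$ to $\omega_s^2$ at unequal times, while the partition controls only the equal-time distance --- is in fact already disposed of by the very pinning you introduce: once $\omega_s^2$ is confined to a ball of radius $R_n$ for all $0 \le s \le T$, the triangle inequality yields
$|\omega_t^1 - \omega_s^2| \ge |\omega_t^1 - \omega_t^2| - |\omega_t^2 - \omega_s^2| \ge s_{n-1} - 2R_n$
deterministically for every pair of times $s, t$, so there are no ``separating-in-time'' path configurations left to estimate; every admissible path is well separated at all pairs of times, and the cross terms $\mathcal{A}_{1,2}^T + \mathcal{A}_{2,1}^T$ of the action are then bounded pointwise by an explicit Coulomb-like quantity of order $\alpha T / (s_{n-1} - 2R_n)$ once the $k$-integral is carried out. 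This is precisely the role of the single-electron pinning, and it is the observation that corrects the gap recorded in Appendix C.
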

\begin{proof}
See Sections \ref{section.partition} and \ref{section.second.localization}. The function $C$ appears in Equation \eqref{equation.condition.for.no.binding}.
\end{proof}
\begin{theorem}[Nelson model]
There is no binding in the 2-body massless Nelson model, Equation \eqref{equation.hamiltonian.nelson.model} with $N = 2$, if a repulsion term $A/|x_1 - x_2|$ is added to the Hamiltonian, with $A \geq B_1\alpha + B_2\alpha^7$, for some explicit positive constants $B_1$ and $B_2$, which are independent of the cutoff $\Lambda$. The result holds for both the unrenormalized and renormalized theories.
\end{theorem}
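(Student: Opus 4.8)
\emph{Reduction to finite $\Lambda$ and to the unrenormalized model.} The first move is to dispose of the renormalization. The counterterm \eqref{equation.renormalizing.term} is proportional to the particle number, so for two particles it is exactly twice its one-particle value; hence the binding energy $2E_1 - E_2(A)$ is unchanged whether one uses $H^1,H_A^2$ or their renormalized counterparts $\widehat{H}^1,\widehat{H}_A^2$. Since the shifted finite-cutoff ground-state energies converge to the renormalized ones (a standard companion of the Nelson renormalization, beyond the convergence of the dynamics recalled after \eqref{equation.renormalizing.term}; see also \cite{GM}), it suffices to prove
\[
\text{inf spec }H_A^2 \;\ge\; 2\,\text{inf spec }H^1
\]
for every \emph{finite} $\Lambda$, with a lower bound on $A$ that does \emph{not} depend on $\Lambda$; the opposite inequality is the subadditivity of the energy recalled at the end of the Methods subsection, so equality — absence of binding — follows. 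From here the plan is to run the scheme behind Theorem~\ref{theorem.main.result.piezoelectric.bipolaron}, but reorganized so that the one place the cutoff enters, the short-distance behaviour of the field-mediated attraction, is never used.

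\emph{Partition of $\mathbb{R}^6$ with a distance threshold.} Next I would introduce, as in the Methods subsection and with the full optimization of \cite{BB}, a partition of unity of the two-particle configuration space localizing the interparticle distance $r=|x_1-x_2|$ on one scale and one of the two ``boxes'' on a radius $R$, and fix a threshold $\rho$ splitting the patches into a \emph{far} family ($r\gtrsim\rho$) and a \emph{near} family ($r\lesssim\rho$). An IMS-type localization identity reduces the bound on $\text{inf spec }H_A^2$ to a patch-by-patch estimate, at the price of curvature errors of order $\rho^{-2}$ and $R^{-2}$ that will have to be reabsorbed at the end. The point of the split is that Coulomb domination — the step that forces the $\Lambda$-dependent constant $C(\Lambda)$ of Theorem~\ref{theorem.main.result.piezoelectric.bipolaron} once it is pushed down to all distances — is invoked only on the far patches, where it is $\Lambda$-free, while the near patches are controlled instead by the refined two-body lower bound of \cite{B} together with the repulsion itself.

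\emph{The far patches.} On the patches with $r\gtrsim\rho$ I would use the observation, transparent from the Feynman--Kac--Nelson functional integral and developed in Section~\ref{section.second.localization}, that when the interparticle distance is bounded below the attraction mediated by the meson field is dominated by a genuine Coulomb term $c\,\alpha/r$ with a \emph{universal} constant $c$, uniformly in $\Lambda$: at distances large compared with $\Lambda^{-1}$ the ultraviolet cutoff is invisible, so the divergence simply does not reach this region. Once $A\ge c\alpha$ the repulsion $A/r$ absorbs this attraction, and — comparing the residual two-particle hard-mode energy with twice its one-particle value by means of the one-particle energy \emph{upper} bound established in the present article, and reabsorbing the $\rho^{-2}$, $R^{-2}$ errors — each far patch contributes at least $2E_1$, provided $\rho$ and $R$ exceed explicit, $\Lambda$-free quantities.

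\emph{The near patches, the final condition, and the obstacle.} On the patches with $r\lesssim\rho$ I would keep the entire field interaction and invoke \cite{B}: $\text{inf spec }H^2_{A=0}\ge 2E_1-\gamma(\alpha)$ with $\gamma(\alpha)$ explicit, $\Lambda$-independent and of size $O(\alpha^2)$ (binding without repulsion being a long-distance effect insensitive to the cutoff). Since $r\lesssim\rho$ there, $A/r\gtrsim A/\rho$, so each near patch contributes at least $2E_1-\gamma(\alpha)+A/\rho\ge 2E_1$ as soon as $A\ge\rho\,\gamma(\alpha)$. Imposing this together with the far-patch requirement $A\ge c\alpha$ and the lower bounds that the localization errors force on $\rho$ (and $R$), and then optimizing all the scales by reducing the infinite-dimensional minimization to a finite-dimensional one as in \cite{BB}, yields a condition of the announced shape $A\ge B_1\alpha+B_2\alpha^7$ with explicit $\Lambda$-free constants; the high power in the second term is the accumulated cost of this chain of scale constraints and, as remarked in the introduction, is most plausibly an artifact. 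The hard part is the far-patch step: turning ``essentially Coulomb at localized interparticle distance'' into an inequality whose constant survives $\Lambda\to\infty$, which requires controlling the retardation (time non-locality) of the true mediated interaction at finite particle mass and the error incurred in splitting the shared meson cloud into two clouds — exactly the place where earning a $\Lambda$-free bound, rather than settling for $C(\Lambda)$, is nontrivial.
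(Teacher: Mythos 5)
Your outline follows the same strategy the paper uses: a partition of the interparticle distance, a second localization pinning one particle, control of the far patches via a $\Lambda$-free Coulomb-type bound on the retarded attraction, control of the near patches via the two-body lower bound of \cite{B} together with the one-particle upper bound from Appendix B, and the observation $Q_\alpha^{\Lambda,2}=2Q_\alpha^{\Lambda,1}$ to pass between the unrenormalized and renormalized statements. The structural skeleton is correct. However, there are two substantive errors.

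First, your near-patch input is wrong, and in a way that changes the shape of the result. You assert that \cite{B} gives $\text{inf spec }H_0^2 \geq 2E_1 - \gamma(\alpha)$ with $\gamma(\alpha)=O(\alpha^2)$ and $\Lambda$-independent. That is not what is available. The bound actually used is (for the choice $\theta=3/2$ in \cite[Eq.\ (2.36)]{B})
\begin{equation*}
E_{\alpha}^{\Lambda,2} + Q_{\alpha}^{\Lambda,2} \;\geq\; -D_1\alpha^2 - D_2\alpha^8,
\end{equation*}
with an extra $O(\alpha^8)$ term whose presence is acknowledged and conjectured to be an artifact, but which cannot be dropped with current methods. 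Combined with $E_{\alpha}^{\Lambda,1}+Q_{\alpha}^{\Lambda,1}\leq 0$ and $Q^{\Lambda,2}=2Q^{\Lambda,1}$, the near-patch estimate is $E_{\alpha}^{\Lambda,2}\geq 2E_{\alpha}^{\Lambda,1}-D_1\alpha^2-D_2\alpha^8$. After the rescaling $b_i=a_i\alpha$, this $\alpha^8$ term is precisely what produces the $B_2\alpha^7$ in the theorem; it is \emph{not}, as you write, an ``accumulated cost of the chain of scale constraints.'' Indeed all the localization-error terms scale linearly with $\alpha$ after that substitution and contribute only to $B_1\alpha$; if an $O(\alpha^2)$ near-patch bound were genuinely available, the theorem would read $A\geq B_1\alpha$, matching the piezoelectric case. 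As written, your argument proves a linear condition $A\geq B_1\alpha$, which is strictly stronger than the stated theorem, and therefore must contain a gap (namely, the unjustified $O(\alpha^2)$ claim).

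Second, the ``hard part'' you single out — turning the retarded attraction into a $\Lambda$-free Coulomb bound on the far patches — is in fact the easy step, and is literally identical to the argument already laid out in Section \ref{section.second.localization}. The cross-term action is computed explicitly to be $4\pi\alpha\!\int\!\int \bigl[\,|\omega_t^i-\omega_s^j|^2+(t-s)^2\bigr]^{-1}\{1 - e^{-\Lambda(t-s)}[\cdots]\}\,ds\,dt$, where the brace is bounded by $2$ \emph{uniformly in $\Lambda$}; integrating against $(d^2+(t-s)^2)^{-1}$ yields the linear-in-$T$ estimate $8\pi^2\alpha T / d$ with $d=s_{n-1}-2R_n$, entirely cutoff-free. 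No additional work on retardation or cloud-splitting is needed beyond what Section \ref{section.second.localization} already provides; the cutoff dependence of $C(\Lambda)$ in the piezoelectric theorem enters solely through the near region (via $C_2(\Lambda)$), and it is replaced here by the $\Lambda$-independent pair $D_1,D_2$ from \cite{B}.
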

\begin{proof}
See Section \ref{section.polaron.nelson}. The constants $B_1$ and $B_2$ can be derived from \eqref{equation.condition.no.binding.nelson}. The explanation of the meaning that the result is true for both the renormalized and unrenormalized theories is found in the paragraph containing Equation \eqref{equation.energy.renormalization.nelson.two.particles}.
\end{proof}
\begin{theorem}[Polaron model]
For the optical bipolaron model of H. Fr\"{o}hlich (see Equation \eqref{equation.polaron.hamiltonian} and consider two particles, in the spirit of Hamiltonian \eqref{equation.piezoelectric.bipolaron}), one has no binding as soon as $A \geq 25.9\alpha$.
\end{theorem}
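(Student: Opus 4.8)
The plan is to establish the one inequality that is missing, namely $\inf\operatorname{spec} H_A^2 \ge 2E_1$ for the two-electron optical polaron (with $E_1$ the ground-state energy of the one-polaron Hamiltonian \eqref{equation.polaron.hamiltonian}); combined with the subadditivity bound $\inf\operatorname{spec} H_A^2 \le 2E_1$ recalled in the Methods subsection, this yields absence of binding. As announced there, I would run for the Fröhlich bipolaron the very same three-ingredient argument carried out for the piezoelectric bipolaron in Sections \ref{section.partition} and \ref{section.second.localization}. The only structural change is favorable: the Fröhlich form factor $1/|k|$ (rather than $1/\sqrt{|k|}$) together with the gapped dispersion $\omega\equiv1$ (rather than $|k|$) makes every estimate ultraviolet-finite without any cutoff, so the analogue of the divergent prefactor $C(\Lambda)$ of Theorem \ref{theorem.main.result.piezoelectric.bipolaron} converges to the finite number $25.9$. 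Accordingly I would work throughout with the cutoff Hamiltonian, whose norm-resolvent limit is \eqref{equation.polaron.hamiltonian}, or directly with the Feynman--Kac functional, in which the $1/|k|$ form factor and the spectral gap make all imaginary-time integrals manifestly convergent.

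The three ingredients, in action: (i) an IMS-type partition of unity $1 = J_0^2 + \sum_{k\ge 1} J_k^2$ of the configuration space $\mathbb{R}^6$ that confines the interparticle distance $r=|x_1-x_2|$ to geometrically spaced shells $r\sim\lambda^k\ell$ (with $J_0$ supported where $r\le\ell$) and, on each piece, additionally confines one electron to a single movable ball $B(y,\rho)$ — four localized degrees of freedom, versus the seven of \cite{FLST}; the IMS formula replaces $H_A^2$ by $\sum_j J_j H_A^2 J_j$ minus the localization error $\tfrac12\sum_j(|\nabla_1 J_j|^2+|\nabla_2 J_j|^2)$, which on the $k$-th shell is of order $\lambda^{-2k}\ell^{-2}$ from the distance partition and $\rho^{-2}$ from the ball. (ii) On $\operatorname{supp}J_k$, $k\ge1$, I would bound the kinetic-plus-field-plus-interaction part from below by the Fröhlich analogue of the refined two-particle estimate of \cite{B}, which produces $2E_1$ up to the ball-localization error and up to the phonon-mediated attraction between the two electrons; the one-particle upper bound established in the present article enters here to close the comparison against $2E_1$. (iii) That phonon-mediated attraction is, on the support of each shell, dominated by a Coulomb term $c\alpha/r$ — in the Feynman--Kac picture the relevant phonon propagator is the genuine Yukawa kernel $e^{-|t-s|}/|X_t-X_s|$, pointwise below the Coulomb kernel — so adding back $A/r$ makes the $k$-th piece $\ge 2E_1$ once $A$ beats $c\alpha$ times the shell ratio plus the errors. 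The piece $J_0$, on which $r$ is bounded, is handled by the cruder bound "$H$ without repulsion $\ge 2E_1 - \mathrm{const}$" plus $A/r\ge A/\ell$. Sending $\rho\to\infty$ and $\ell\to\infty$ kills the localization errors, and optimizing the remaining free data — the profiles of the $J_j$ and the shell ratio $\lambda$ — by reducing the infinite-dimensional minimization to a low-dimensional one, as in \cite{BB}, collapses the whole family of conditions to the clean inequality $A\ge 25.9\,\alpha$.

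The main obstacle, and the reason the constant is of size $\sim 26$ rather than the naive $\sim 1$ that a pointwise completion of the square would suggest, is the retardation in step (iii): the phonon interaction couples the electron positions $X^1_t$ and $X^2_s$ at \emph{different} imaginary times, so the shell constraint $|x_1-x_2|\ge R$ — which holds only at equal times — does not give $1/|X^1_t-X^2_s|\le 1/R$; one must instead play the retarded attraction against the kinetic energy, which penalizes path wandering, and controlling this trade-off with a good constant is precisely what the two-particle lower bound of \cite{B} (in its Fröhlich version) is designed to do. I therefore expect the real labor to be (a) transcribing that lower bound to the cutoff-free Fröhlich setting with error terms that provably vanish as $\rho,\ell\to\infty$ while retaining the $c\alpha/r$ attraction with its sharpest coefficient, and (b) verifying that the reduced low-dimensional optimization — which is also where the gain over the $36.9$ of \cite{BB} and the $37.7$ of \cite{FLST} is realized, partly through localizing fewer degrees of freedom and partly through full optimization of the partition — does deliver $A\ge 25.9\,\alpha$. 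The absence of an ultraviolet cutoff, by contrast, is not an obstacle at all.
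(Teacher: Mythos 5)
Your overall architecture matches the paper: IMS partition of the interparticle distance, single-electron pinning (four localized degrees of freedom instead of seven), a crude global bound on the innermost piece, a distance-dependent bound on the shells, and a rescaling-plus-numerical optimization of the resulting family of conditions. But there is a genuine gap in your step (iii), and it is precisely at the point you yourself identify as the crux.

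You propose to handle the shells ($k\ge 1$) with ``the Fr\"{o}hlich analogue of the refined two-particle estimate of \cite{B}'' and to control the retardation by ``playing the retarded attraction against the kinetic energy.'' That cannot work: the two-particle lower bound of \cite{B}/\cite{BT} in the Fr\"{o}hlich case is $\inf\mathrm{spec}\,H_0^2 \ge -2\alpha - 2\alpha^2 = 2E_1 - 2\alpha^2$, a global, separation-blind estimate. If you use it on the $k$-th shell you are left with a fixed error of order $\alpha^2$ (plus localization errors), against which you can only set $A/s_{k+1}$, and the latter decays to zero as $k\to\infty$. The shell conditions would therefore fail for all large $k$. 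The paper resolves this differently, and it is the heart of the argument (indeed the correction of an error flagged in Appendix C): the pinning ball of radius $R_n<s_{n-1}/2$ combined with the shell constraint yields, at the level of confined Brownian trajectories, the \emph{unequal-time} separation
\begin{equation*}
|X_t^1 - X_s^2| \ge |X_t^1 - X_t^2| - |X_t^2 - X_s^2| \ge s_{n-1} - 2R_n \qquad \text{for all } 0\le s,t\le T,
\end{equation*}
because both $X_t^2$ and $X_s^2$ stay inside the same ball $B(u,R_n)$. Thus the cross terms of the retarded action obey the \emph{pointwise} bound $\mathcal{A}_{1,2}^T + \mathcal{A}_{2,1}^T \le \sqrt{2}\alpha T/(s_{n-1}-2R_n)$, and the surviving diagonal actions $\mathcal{A}_{1,1}^T$, $\mathcal{A}_{2,2}^T$ are independent, so the expectation factors into two one-polaron integrals. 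This produces the crucial shell estimate $(\psi_{n,u},H_0\psi_{n,u})\ge\bigl(2E-\sqrt{2}\alpha/(s_{n-1}-2R_n)\bigr)\|\psi_{n,u}\|^2$, whose error \emph{decays with $n$}, exactly matching the decay of $A/s_{n+1}$. The one-particle upper bound $E\le-\alpha$ and the two-particle lower bound are used \emph{only} on $J_0$, to convert $-2\alpha-2\alpha^2$ into $2E-2\alpha^2$; on the shells the value $2E$ appears directly from the factorization, with no appeal to \cite{B}.

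One further imprecision: the localization errors are not killed by sending $\rho,\ell\to\infty$. After the rescaling $b_i=a_i\alpha$, $L_n=R_n\alpha$, every term becomes $O(\alpha)$ and \emph{all} of them survive into the expression $\bigvee_n F_n$; the constant $25.9$ is the numerical infimum of that expression over the partition data, not a limit where errors vanish. The reduction to a low-dimensional problem works because, with $b_n=(n-1)b_2$ for $n\ge 3$ and $x_n\equiv x$, one shows $F_{n+1}\le F_n$ for $n\ge 1$, so the whole infimum equals that of $F_0\vee F_1$ over $(b_0,b_1,b_2,x)$.

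\end{document}
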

\begin{proof}
See Section \ref{section.polaron.nelson}.
\end{proof}
As already mentioned, the bound $A \geq 25.9\alpha$ is an improvement of about $30\%$ over previous results on the no-binding of bipolarons \cite{FLST, BB}.
\end{subsection}
\begin{subsection}{Remark on Some Functional Integrals}
\label{subsection.remark.functional.integrals}
Our proofs below will rely heavily on the use of functional integrals for the estimation of ground-state energies. In particular, for the two-electron piezoelectric polaron (or two-nucleon massless Nelson model with repulsion) we have that the ground-state energy is bounded from below by
\begin{align}
-\limsup_{T \to \infty}\frac{1}{T}&\log\left\{\sup_{(x, y) \in \mathbb{R}^6}E^{(x, y)}\left[\exp\left(\alpha\sum_{m, n = 1}^2\int\!\!\!\int_0^T\!\!\!\int_0^t\chi_{\Lambda}(k)e^{-|k|(t - s)}e^{-ik(X_t^m - X_s^n)}|k|^{-1}\,ds\,dt\,dk\right.\right.\right.\nonumber\\
& \left.\left.\left. \qquad\qquad\qquad\qquad - A\int_0^T\frac{dt}{|X_t^1 - X_t^2|}\right)\right]\right\},
\label{equation.feynman.kac.piezoelectric.polaron}
\end{align}
where $X = (X^1, X^2)$ is $6 = 3 + 3$-dimensional Brownian motion starting at $(x, y)$, and $E^{(x, y)}$ denotes expectation with respect to that process. As for the optical bipolaron model, the corresponding lower bound is
\begin{gather}
-\limsup_{T \to \infty}\frac{1}{T}\log\left\{\sup_{(x, y) \in \mathbb{R}^6}E^{(x, y)}\left[\exp\left(\frac{\alpha}{\sqrt{2}}\sum_{m, n = 1}^2\int_0^T\!\!\!\int_0^t\frac{e^{-(t - s)}}{|X_t^m - X_s^n|}\,ds\,dt - A\int_0^T\frac{dt}{|X_t^1 - X_t^2|}\right)\right]\right\}.
\label{equation.feynman.kac.polaron}
\end{gather}
These two estimates follow basically from the analysis contained in \cite[Appendix A]{B} and \cite[Chapter 2]{B2}. Noteworthy is the fact that the quantum field variables have disappeared in the two Feynman-Kac-like formulas above. The expectation in \eqref{equation.feynman.kac.piezoelectric.polaron} was basically known to Nelson in his first work on his model \cite{N1} -- a functional integral analysis of the model was in fact his first approach to the Hamiltonian \eqref{equation.hamiltonian.nelson.model}, that he left behind in favor of operator methods \cite{N2}. The expectation in \eqref{equation.feynman.kac.polaron} was found for the first time by Feynman \cite{F} in the case of a single electron, by integrating the quantum field variables, using methods developed in \cite{F2}. These two estimates, and variations of them, will be used throughout the rest of article.

We shall make use also of the following exact Feynman-Kac formulas for the ground-state energies of the two-electron piezolectric polaron and optical bipolaron models, respectively,
\begin{gather}
-\lim_{R \to \infty}\lim_{T \to \infty}T^{-1}\log\left[\int_{B_R}\int_{B_R}\int\!\int\exp\left(\alpha\sum_{m, n = 1}^2\int\!\!\!\int_0^T\!\!\!\int_0^t\chi_{\Lambda}(k)e^{-|k|(t - s)}e^{-ik(\omega_t^m - \omega_s^n)}|k|^{-1}\,ds\,dt\,dk\right.\right.\nonumber\\
\left.\left.\qquad\qquad\qquad\qquad - A\int_0^T\frac{dt}{|\omega_t^1 - \omega_t^2|}\right)\eta_R(\omega^1)\eta_R(\omega^2)\,dW_{x, x}^T(\omega^1)\,dW_{y, y}^T(\omega^2)\,dx\,dy\right],\label{equation.feynman.kac.exact.piezoelectric.polaron}\\
-\lim_{R \to \infty}\lim_{T \to \infty}T^{-1}\log\left[\int_{B_R}\int_{B_R}\int\!\int\exp\left(\frac{\alpha}{\sqrt{2}}\sum_{m, n = 1}^2\int_0^T\!\!\!\int_0^t\frac{e^{-(t - s)}}{|\omega_t^m - \omega_s^n|}\,ds\,dt\right.\right.\nonumber\\
\left.\left. \qquad\qquad\qquad\qquad - A\int_0^T\frac{dt}{|\omega_t^1 - \omega_t^2|}\right)\eta_R(\omega^1)\eta_R(\omega^2)\,dW_{x, x}^T(\omega^1)\,dW_{y, y}^T(\omega^2)\,dx\,dy\right],\label{equation.feynman.kac.exact.polaron}
\end{gather}
where $B_R$ is the ball centered at the origin of radius $R$, $\omega^1$ and $\omega^2$ are independent 3D Brownian motions, $\eta_R$ is the indicator function equal to 1 if a Brownian path $\omega$ is completely contained in $B_R$ and 0 otherwise, and $dW_{x, x}^T$ is conditional Wiener measure for Brownian paths that start and end at the point $x$ in $\mathbb{R}^3$.

Some words pertaining Formulas \eqref{equation.feynman.kac.exact.piezoelectric.polaron} and \eqref{equation.feynman.kac.exact.polaron} are in order now. They can be obtained from the analysis found in a book by G. Roepstorff \cite{R}, regarding the computation of the partition function for a system consisting of a particle linearly coupled to a Bose field; specifically, the one found in Sections 5.1 and 5.3. The formulas follow basically from selecting $V(x)$ equal to $\infty$ if $x$ is not in the ball $B_R$, and equal to 0 otherwise, in \cite[Equation (5.3.19)]{R}. The expressions inside the exponentials in \eqref{equation.feynman.kac.exact.piezoelectric.polaron} and \eqref{equation.feynman.kac.exact.polaron} are identical to the ones in \eqref{equation.feynman.kac.piezoelectric.polaron} and \eqref{equation.feynman.kac.polaron}, because the way the field variables are integrated in \cite[Section 5.1]{R} is equivalent to the one in \cite[Section 2.1]{B2}. Equation \eqref{equation.feynman.kac.exact.polaron} for the polaron was used in the work by Frank, Lieb, Seringer, and Thomas alluded to before \cite[Equations (1.22) and (1.23)]{FLST}.
\end{subsection}
\begin{subsection}{The Structure of the Article and Acknowledgments}
We now give an outline of the article. In Section \ref{section.partition} we partition the distance between electrons in the context of the piezoelectric bipolaron. This is the first localization. In Section \ref{section.second.localization} we continue referring to the piezoelectric bipolaron, and another localization is performed, where a single electron is placed in a ball, thus ``pinning'' it to a center. This second localization allows the two electrons to stay far apart, even with the Coulomb-like attraction between them being present, which arises from the coupling with the field. (This is not a totally trivial fact, as will become clear later in the paper.) The final result $A \geq C(\Lambda)\alpha$ is given in this section. Then, in Section \ref{section.polaron.nelson} we study what happens when the method used for the piezoelectric polaron is mimicked in the optical bipolaron and Nelson models. In particular, we obtain an improvement over previous results on no-binding of optical bipolarons, bringing the condition $A \geq 36.9\alpha$ to $A \geq 25.9\alpha$, as we have previously mentioned. In Appendix A we provide a short description of the techniques behind the lower bounds for the spectra of the models involved in this work. In Appendix B we provide a short proof of an upper and a lower bound for the massless Nelson model that are used in the no-binding proof for the piezoelectric polaron. In Appendix C we explain and address a few mistakes made in the Ph.D. thesis of the author, on which the present article is partially based.

We would like to take this last paragraph to thank Lawrence Thomas for very long and productive discussions. The author acknowledges as well partial support from the Danish Council for Independent Research (Grant number DFF-4181-00221).
\end{subsection}
\end{section}
\begin{section}{Partition of Interparticle Distance.}
\label{section.partition}
We will focus in this and the following section on the piezoelectric polaron model. Only in Section \ref{section.polaron.nelson} will we refer to the optical polaron and Nelson models. In this section we perform a partition of unity on the configuration space $\mathbb{R}^6$ for the position $(x, y)$ of two three-dimensional particles. The construction here follows the lines in \cite[Section 2]{FLST}, adapted to our purposes for the piezoelectric polaron. Let $a_0, a_1, a_2, \ldots$ be positive numbers and define, for each $n \geq 0$, $s_n \equiv \sum_{i = 0}^n a_i$. We partition the half real-line $[0, \infty)$ with the functions
\begin{align}
\varphi_0(t) & \equiv
\begin{cases}
1 & 0 \leq t \leq a_0\\
\displaystyle\cos\left[\frac{\pi(t - a_0)}{2a_1}\right]\qquad  & a_0 \leq t \leq a_0 + a_1,
\end{cases}
\end{align}
and
\begin{align}
\varphi_n(t) & \equiv
\begin{cases}
\displaystyle\sin\left[\frac{\pi\left(t - s_{n - 1}\right)}{2a_n}\right] \qquad & s_{n - 1} \leq t \leq s_n\vspace{1mm}\\
\displaystyle\cos\left[\frac{\pi\left(t - s_n\right)}{2a_{n +1}}\right] & s_n \leq t \leq s_{n + 1}.
\end{cases}
\end{align}
(Each $\varphi_n$ is defined as 0 outside of the intervals given above.) Each one of the functions $\varphi_n$ represents an asymmetric bump in the shape of a sine function with one side longer than the other, with the exception of $\varphi_0$, which is the shape of a half-pill. By construction, $\sum_{n = 0}^{\infty}\varphi_n^2(t)$ is equal to 1 for all $t$, and so the functions $\varphi_n$ form a quadratic partition of unity for $[0, \infty)$. We then use the functions $\varphi_n$ to separate the interparticle distance, resulting in a partition of $\mathbb{R}^6$; namely, we consider the functions $\phi_n(x, y) \equiv \varphi_n(|x - y|)$ for $n \geq 0$. And since obviously $\sum_{n = 0}^{\infty}\phi_n^2 = 1$, the collection of functions $\phi_0, \phi_1, \phi_2, \ldots$ forms a quadratic partition of unity for $\mathbb{R}^6$. We then have, by the IMS formula \cite[Section 3.1]{CFKS},
\begin{gather}
H_A^2 = \sum_{n = 0}^{\infty}\phi_n H_A^2\phi_n - \frac{1}{2}\sum_{n = 0}^{\infty}|\nabla\phi_n|^2,
\label{equation.IMS}
\end{gather}
where $H_A^2$ is the Hamiltonian of the piezoelectric bipolaron, Equation \eqref{equation.piezoelectric.bipolaron}. The second term in \eqref{equation.IMS} is a localization error. It tells us that localizing comes at an increase in kinetic energy (which is expected, given the uncertainty principle).

For $\psi$ as any state in the quadratic form domain of $H_A^2$ one has, by defining $\psi_n \equiv \phi_n\psi$,
\begin{align}
(\psi, H_A^2\psi) & = \sum_{n = 0}^{\infty}(\psi_n, H_A^2\psi_n) - \frac{1}{2}\sum_{n = 0}^{\infty}\left(\psi, |\nabla\phi_n|^2\psi\right)\nonumber\\
& = \sum_{n = 0}^{\infty}\left(\psi_n, H_0^2\psi_n\right) - \frac{1}{2}\sum_{n = 0}^{\infty}\left(\psi, |\nabla\phi_n|^2\psi\right) + A\sum_{n = 0}^{\infty}\left(\psi_n, |x - y|^{-1}\psi_n\right)\nonumber\\
& \geq (\psi_0, H_0^2\psi_0) + \sum_{n = 1}^{\infty}(\psi_n, H_0^2\psi_n) - \frac{1}{2}\sum_{n = 0}^{\infty}(\psi, |\nabla\phi_n|^2\psi) + A\sum_{n = 0}^{\infty}\frac{\|\psi_n\|^2}{s_{n + 1}}.
\label{equation.first.localized.bound}
\end{align}
The idea now is to prove that, if $A$ is big enough,
\begin{equation}
(\psi, H_A^2\psi) \geq 2\text{ inf spec }H^1\sum_{n = 0}^{\infty}\|\psi_n\|^2 = 2\text{ inf spec }H^1\|\psi\|^2,
\label{equation.no.binding.condition}
\end{equation}
which will imply no-binding. (Recall that $H^1$ is the analog of Equation \eqref{equation.piezoelectric.bipolaron} for just one particle, Equation \eqref{equation.piezoelectric.polaron}.) We will accomplish this by bounding each one of the terms in \eqref{equation.first.localized.bound}. We will start with the third one -- it can be controlled as follows: by noticing that $|\nabla\phi_n(x, y)|^2 = 2|\varphi_n'(|x - y|)|^2$ for all $n$ and recalling that $\sum_{m = 0}^{\infty}\phi_m^2 = 1$,
\begin{align}
\sum_{n = 0}^{\infty}(\psi, |\nabla\phi_n|^2\psi) & = 2\sum_{n = 0}^{\infty}(\psi, |\varphi_n'(|x - y|)|^2\psi) = 2\sum_{m = 0}^{\infty}\sum_{n = 0}^{\infty}\left(\psi_m, |\varphi_n'(|x - y|)|^2\psi_m\right)\nonumber\\
& \leq \frac{\pi^2}{a_1^2}\|\psi_0\|^2 + \sum_{m = 1}^{\infty}\frac{\pi^2}{\min(a_{m + 1}, a_m)^2}\|\psi_m\|^2.
\end{align}
We now continue with the first term in \eqref{equation.first.localized.bound}. This corresponds to the case where the electrons are close to each other. We provide in Appendix B the following lower and upper bounds for the piezoelectric polaron: The ground-state energy of the two-particle piezoelectric polaron without repulsion is bounded below by $-2C_1(\Lambda)\alpha - 8C_2(\Lambda)\alpha^2$, where $C_1$ and $C_2$ are explicit but diverging positive functions of $\Lambda$, and the one-particle piezoelectric polaron ground-state energy, which we denote by $E$ (equal to $\text{inf spec }H^1$, using the notation of \eqref{equation.no.binding.condition}), is bounded above by $-C_1(\Lambda)\alpha$, where $C_1$ is the same as in the lower bound. Then
\begin{gather}
\left(\psi_0, H_0^2\psi_0\right) \geq \text{inf spec }H_0^2\|\psi_0\|^2 \geq \left[-2C_1(\Lambda)\alpha - 8C_2(\Lambda)\alpha^2\right]\|\psi_0\|^2 \geq \left[2E - 8C_2(\Lambda)\alpha^2\right]\|\psi_0\|^2.
\label{equation.estimate.first.region}
\end{gather}
By grouping terms, we summarize what has been done so far -- from Equation \eqref{equation.first.localized.bound},
\begin{align}
\left(\psi, H_A\psi\right) \geq & \left[2E - 8C_2(\Lambda)\alpha^2 - \frac{\pi^2}{2a_1^2} + \frac{A}{a_0 + a_1}\right]\|\psi_0\|^2\nonumber\\
& + \sum_{n = 1}^{\infty}\left(\psi_n, H_0^2\psi_n\right) + \sum_{n = 1}^{\infty}\left[\frac{A}{s_{n + 1}} - \frac{\pi^2}{2\min(a_{n + 1}, a_n)^2}\right]\|\psi_n\|^2.
\end{align}
This concludes the first part of the bounding of the terms in \eqref{equation.first.localized.bound}. What remains is the bounding of the energy expectations where the electrons are far apart, the terms $(\psi_n, H_0\psi_n)$ for $n \geq 1$. An additional localization will be performed to control these expectations in the next section.
\end{section}
\begin{section}{Further Localization: Single-Electron Pinning}
\label{section.second.localization}
Let $n \geq 1$. We will spend this section bounding from below the term $(\psi_n, H_0^2\psi_n)$. We will perform a second localization where one of the electrons will be effectively pinned down, which will allow us at the end to arrive at a lower bound. This localization will be made to only one of the electrons, but either of them may be selected -- we will pick the ``second one'' (the one with $y$-coordinates). In \cite[Section 2]{FLST} the two electrons were pinned in a symmetrical fashion, which introduced an extra localization error with respect to what we do here.

For the space $\mathbb{R}^3$ we construct the following partition: Let $f$ be a function $\mathbb{R}^3 \to \mathbb{R}$ with the following properties: $f$ is continuous, $f$ has compact support, $f$ is $C^1$ on its support, and $\|f\|_2 = 1$. Then consider the family of functions $f_u(x, y) \equiv f(y - u)$. Then obviously $\int_{\mathbb{R}^3}f_u^2\,du = 1$, and so the family forms a continuous partition of unity. We then have the following formula,
\begin{gather}
H_0^2 = \int_{\mathbb{R}^3}f_u H_0^2 f_u\,du - \frac{1}{2}\int_{\mathbb{R}^3}|\nabla f_u |^2\,du.
\label{equation.second.localization}
\end{gather}
(This follows from a proof analogous to the one found in \cite[Section 3.1]{CFKS}.) The symmetry of the problem at hand will make it at the end very natural to select a sphere of a certain radius, say $R_n$, as the support of $f$, and so it will be fixed at that. One would like now to make the localization error, the second term on the right side of the equation above, as small as possible. This is just equal to $\frac{1}{2}\int_{\mathbb{R}^3}|\nabla f(y)|^2\,dy$, the infimum of which over all functions $f$ with the aforementioned properties, and support equal to a sphere of radius $R_n$ centered at the origin, is just the infimum of the spectrum of $-\Delta/2$ with Dirichlet conditions on the boundary of the sphere. We then choose $f$ to be equal to the ground-state of the aforementioned operator on the sphere of radius $R_n$,
\begin{gather}
f(y) \equiv
\begin{cases}
\displaystyle\frac{\sin\left(\pi |y|/R_n\right)}{\sqrt{2\pi R_n}|y|}\qquad & |y| \leq R_n\\
0 \qquad & |y| > R_n.
\end{cases}
\end{gather}
Since $-\Delta f = \left(\pi^2/R_n^2\right) f$, we conclude that $\frac{1}{2}\int_{\mathbb{R}^3}|\nabla f_u|^2\,du$ is bounded below by $\pi^2/(2R_n^2)$ and that it can be made equal to this value. We furthermore pick $R_n$ to satisfy the relationship $2R_n < s_{n - 1}$. From Equation \eqref{equation.second.localization}, if $\psi_{n, u}$ denotes the function $\psi\phi_n f_u$, we then obtain the formula
\begin{gather}
(\psi_n, H_0^2\psi_n) = \int_{\mathbb{R}^3}(\psi_{n, u}, H_0^2\psi_{n, u})\,du - \frac{\pi^2}{2R_n^2}\|\psi_n\|^2.
\label{equation.second.localization.performed}
\end{gather}

With the second localization, we have accomplished the following: the ``second'' electron has been pinned down to a ball in 3-space, and the ``first'' electron lies away from the second one in a shell that encloses the ball just mentioned, staying always at a certain distance from it, without intersecting it. The set where the electrons are may be described as $\Upsilon_{n, u} \equiv \left\{(x, y) \in \mathbb{R}^6 : s_{n - 1} \leq |x - y| \leq s_{n + 1}, |y - u| \leq R_n\right\}$, and what has been obtained with this is that the electrons have been effectively separated, as the following inclusion shows,
\begin{gather}
\Upsilon_{n, u} \subset \left\{x \in \mathbb{R}^3 : s_{n - 1} - R_n \leq |x - u| \leq s_{n + 1} + R_n\right\}\times\left\{y \in \mathbb{R}^3 : |y - u| \leq R_n\right\}.
\end{gather}
Note that, since $2R_n < s_{n - 1}$, these two last subsets of $\mathbb{R}^3$ do not intersect, and so the cartesian product above can be embedded in $\mathbb{R}^3$ as the union of the two. If we now let $V_{n, u}$ be the separating potential corresponding to $\Upsilon_{n, u}$, namely $V_{n, u}(x, y) = 0$ if $(x, y) \in \Upsilon_{n, u}$ and $V_{n, u} = \infty$ otherwise, by noting that $V_{n, u}$ commutes with the potentials of $H_0$, we get the following Feynman-Kac formula,
\begin{align}
& \, (\psi_{n, u}, H_0\psi_{n, u})\nonumber\\
= & \left(\psi_{n, u}, \left(H_0 + V_{n, u}\right)\psi_{n, u}\right)\nonumber\\
\geq & \,\,\text{inf spec }(H_0 + V_{n, u})\|\psi_{n, u}\|^2\nonumber\\
= & -\lim_{R \to \infty}\lim_{T \to \infty}T^{-1}\log\left[\int_{B_R}\!\int_{B_R}\int\!\int\exp\left(\sum_{i, j = 1}^2\mathcal{A}_{i, j}^T\right)\right.\nonumber\\
&\qquad\qquad\qquad\qquad\qquad\times\left.\exp\left(-\int_0^T V_{n, u}(\omega_t^1, \omega_t^2)\,dt\right)\,dW_{x, x}^T(\omega^1)\,dW_{y, y}^T(\omega^2)\,dx\,dy\right]\|\psi_{n, u}\|^2,
\label{equation.lower.bound.second.localization}
\end{align}
where $\mathcal{A}_{i, j}^T$ is defined as
\begin{gather}
\alpha\int_0^T\!\!\!\int_0^t\!\!\int_{|k| \leq \Lambda}\frac{e^{-|k|(t - s)}}{|k|}e^{-ik(\omega_t^i - \omega_s^j)}\,dk\,ds\,dt.
\label{equation.action.piezo.electric.polaron}
\end{gather}
(See Subsection \ref{subsection.remark.functional.integrals} of the introduction.) Note how $\exp\left(-\int_0^T V_{n, u}(\omega_t)\,dt\right)$ may be conveniently expressed as $\Omega_{n, u}^T(\omega)$, the function equal to 1 if the 6-dimensional Brownian path $\omega_t$ is completely contained in $\Upsilon_{n, u}$ for all times $0 \leq t \leq T$, and zero otherwise. Note also how \eqref{equation.action.piezo.electric.polaron} can be computed explicitly, yielding
\begin{align}
4\pi\alpha\int_0^T\!\!\!\int_0^t &\frac{1}{|\omega_t^i - \omega_s^j|^2 + (t - s)^2}\nonumber\\
&\times\left\{1 - e^{-\Lambda(t - s)}\left[\frac{t - s}{|\omega_t^i - \omega_s^j|}\sin\left(\Lambda|\omega_t^i - \omega_s^j|\right) + \cos\left(\Lambda|\omega_t^i - \omega_s^j|\right)\right]\right\}\,ds\,dt.
\end{align}
It is easy to verify as well that the expression in braces is bounded below and above by 0 and 2, respectively. If now $\omega$ is a Brownian path such that $\Omega_{n, u}^T(\omega) = 1$,
\begin{align}
|\omega_t^1 - \omega_s^2| \geq & \, |\omega_t^1 - \omega_t^2| - |\omega_s^2 - \omega_t^2| \geq s_{n - 1} - 2R_n,\\
|\omega_t^2 - \omega_s^1| = |\omega_s^1 - \omega_t^2| \geq & \, |\omega_s^1 - \omega_s^2| - |\omega_t^2 - \omega_s^2| \geq s_{n - 1} - 2R_n,
\end{align}
and we then have that
\begin{align}
& \int\!\!\!\int\exp\left(\sum_{i, j = 2}^2\mathcal{A}_{i, j}^T\right)\Omega_{n, u}^T(\omega)\,dW_{x, x}^T(\omega_1)\,dW_{y, y}^T(\omega_2)\nonumber\\
\leq & \, \int\!\!\!\int\exp\left(\mathcal{A}_{1, 1}^T + \mathcal{A}_{2, 2}^T\right)\exp\left(\int_0^T\!\!\!\int_0^t\frac{16\pi\alpha}{(s_{n - 1} - 2R_n)^2 + (t - s)^2}\,ds\,dt\right)\,dW_{x, x}^T(\omega_1)\,dW_{y, y}^T(\omega_2),
\label{equation.functional.integral.estimate.piezo.electric.polaron}
\end{align}
and since
\begin{gather}
16\pi\alpha\int_0^T\!\!\!\int_0^t\frac{\,ds\,dt}{(s_{n - 1} - 2R_n)^2 + (t - s)^2} \leq 16\pi\alpha T\int_0^{\infty}\frac{dx}{(s_{n - 1} - 2R_n)^2 + x^2} = \frac{8\pi^2\alpha T}{s_{n - 1} - 2R_n},
\end{gather}
it follows that, by using the independence of $\mathcal{A}_{1, 1}^T$ and $\mathcal{A}_{2, 2}^T$, \eqref{equation.functional.integral.estimate.piezo.electric.polaron} is bounded above by
\begin{gather}
\exp\left(\frac{8\pi^2\alpha T}{s_{n - 1} - 2R_n}\right)\int\exp\left(\mathcal{A}_{1, 1}^T\right) dW_{x, x}^T\int\exp\left(\mathcal{A}_{2, 2}^T\right) dW_{y, y}^T.
\end{gather}
From this and the estimate \eqref{equation.lower.bound.second.localization}, we obtain that
\begin{align}
& \left(\psi_{n, u}, H_0\psi_{n, u}\right)\nonumber\\
\geq & \left\{-\lim_{R \to \infty}\lim_{T \to \infty}T^{-1}\log\left[\int_{B_R}\int\exp\left(\mathcal{A}_{1, 1}^T\right)dW_{x, x}^T\,dx\right]\right.\nonumber\\
& \,\,\,\, \left. - \lim_{R \to \infty}\lim_{T \to \infty}T^{-1}\log\left[\int_{B_R}\int\exp\left(\mathcal{A}_{2, 2}^T\right) dW_{y, y}^T\,dy\right] - \frac{8\pi^2\alpha}{s_{n - 1} - 2R_n}\right\}\|\psi_{n, u}\|^2\nonumber\\
= & \left(2E - \frac{8\pi^2\alpha}{s_{n - 1} - 2R_n}\right)\|\psi_{n, u}\|^2,
\label{equation.estimate.second.region}
\end{align}
where, as defined in the previous section, $E$ denotes the ground-state energy of the 1-electron piezoelectric polaron, $\text{inf spec }H^1$. (Right above Equation \eqref{equation.estimate.first.region}.) We then conclude, from Equation \eqref{equation.second.localization.performed},
\begin{gather}
(\psi_n, H_0\psi_n) \geq \left(2E - \frac{8\pi^2\alpha}{s_{n - 1} - 2R_n} - \frac{\pi^2}{2R_n^2}\right)\|\psi_n\|^2.
\end{gather}
By collecting terms, we conclude from this and the previous section that
\begin{align}
(\psi, H_A^2\psi) \geq & \left[2E - 8C_2(\Lambda)\alpha^2 - \frac{\pi^2}{2a_1^2} + \frac{A}{a_0 + a_1}\right]\|\psi_0\|^2\nonumber\\
& + \sum_{n = 1}^{\infty}\left[2E + \frac{A}{s_{n + 1}} - \frac{\pi^2}{2\min (a_{n + 1}, a_n)^2} - \frac{8\pi^2\alpha}{s_{n - 1} - 2R_n} - \frac{\pi^2}{2R_n^2}\right]\|\psi_n\|^2,
\end{align}
and so no-binding will occur if each one of the terms in brackets is greater than or equal to $2E$, or
\begin{gather}
A \geq \left[8C_2(\Lambda)(b_0 + b_1) + \frac{\pi^2(b_0 + b_1)}{2b_1^2}\right] \vee \bigvee_{n = 1}^{\infty}\left[\frac{\pi^2 t_{n + 1}}{2\min(b_{n + 1}, b_n)^2} + \frac{8\pi^2 t_{n + 1}}{t_{n - 1} - 2L_n} + \frac{\pi^2 t_{n + 1}}{2L_n^2}\right]\alpha,
\label{equation.condition.for.no.binding}
\end{gather}
where we made the substitutions $b_i = a_i\alpha$, $t_i = s_i\alpha$ and $L_n = R_n\alpha$ in order to factor out $\alpha$, and $\vee$ denotes maximum. The expression to the right in \eqref{equation.condition.for.no.binding} is certainly not $\infty$ if the parameters are chosen accordingly. For instance, by picking $L_n = t_{n - 1}/4$ and $b_i = bl^i$, for some $l > 1$, the right side of \eqref{equation.condition.for.no.binding} is less than or equal to
\begin{align}
& \,\left[8C_2(\Lambda)b(1 + l) + \frac{\pi^2(1 + l)}{2bl^2}\right]\vee\bigvee_{n = 1}^{\infty}\left[\frac{\pi^2 l^{n + 2}}{2b(l - 1)l^{2n}} + \frac{16\pi^2 l^{n + 2}}{l^n - 1} + \frac{8\pi^2 (l - 1)l^{n + 2}}{b(l^{n} - 1)^2}\right]\alpha\nonumber\\
\leq & \, \left[8C_2(\Lambda)b(1 + l) + \frac{\pi^2(1 + l)}{2bl^2}\right]\vee\left[\frac{\pi^2l}{2b(l - 1)} + \frac{16\pi^2l^3}{l - 1} + \frac{8\pi^2l^3}{b(l - 1)}\right]\alpha.
\end{align}
For a fixed value of $\Lambda$ one can in fact minimize the entire expression above, Equation \eqref{equation.condition.for.no.binding}, despite it being an infinite dimensional problem. This will be illustrated in the next section for the polaron model. In any case, the upshot is that there is an explicit function of $\Lambda$, which we shall call $C$, such that if $A \geq C(\Lambda)\alpha$, then no binding occurs.
\end{section}
\begin{section}{No-Binding in the Optical Polaron and Nelson Models}
\label{section.polaron.nelson}
The same calculation as above can be carried out for the optical polaron model, and one can minimize completely the final result, the analog of Equation \eqref{equation.condition.for.no.binding}, since there is no dependence on $\Lambda$. Computations here are entirely similar to those from the previous section, and very few changes have to be made. In the following, $H_A^2$ will be the two-electron optical polaron model Hamiltonian with Coulomb strength $A$ (as before for the piezoelectric polaron), $H^1$ will be the 1-electron analog, and $E$ will be the ground-state energy of $H^1$. We will make use of two inequalities, that we will now briefly explain how to obtain. The first inequality is $\text{inf spec }H_0^2 \geq -2\alpha - 2\alpha^2$ \cite{BT, B, B2, FLST}, which appears explicitly right below \cite[Equation (3.13)]{BT}, and can also be obtained by first replacing \cite[Equation (1.20)]{FLST} by the inequality $E \geq -\alpha - \alpha^2/4$, appearing below \cite[Equation (3.9)]{BT}, and then using \cite[Equation (2.26)]{FLST}. The second inequality is $E \leq -\alpha$ \cite{F, LLP, LP, G}, which follows from the arguments in Appendix B, but applied to the optical polaron. We then have that Equation \eqref{equation.estimate.first.region} changes to
\begin{gather}
(\psi_0, H_0^2\psi_0) \geq (-2\alpha - 2\alpha^2)\|\psi_0\|^2 \geq (2E - 2\alpha^2)\|\psi_0\|^2,
\end{gather}
whereas now \eqref{equation.estimate.second.region} becomes
\begin{gather}
\left(\psi_{n, u}, H_0\psi_{n, u}\right) \geq \left(2E - \frac{\sqrt{2}\alpha}{s_{n - 1} - 2R_n}\right)\|\psi_{n, u}\|^2,
\end{gather}
which follows immediately from the argument in the previous section leading to \eqref{equation.estimate.second.region}, but applied now to the bipolaron action
\begin{gather}
\sum_{i, j = 1}^2\int_0^T\!\!\!\int_0^t\frac{e^{-(t - s)}}{|X_t^i - X_s^j|}\,ds\,dt.
\end{gather}
From this, the no-binding condition (the analog of Equation \eqref{equation.condition.for.no.binding}) becomes
\begin{gather}
A \geq \left[2(b_0 + b_1) + \frac{\pi^2(b_0 + b_1)}{2b_1^2}\right]\vee\bigvee_{n = 1}^{\infty}\left[\frac{\pi^2 t_{n + 1}}{2\min(b_{n + 1}, b_n)^2} + \frac{\sqrt{2}t_{n + 1}}{t_{n - 1} - 2L_n} + \frac{\pi^2 t_{n + 1}}{2L_n^2}\right]\alpha.
\label{equation.no.binding.condition.bipolaron}
\end{gather}
Since the expression involves only numbers, it is amenable to minimization. Even though one can eliminate the variable $L_n$ by solving a cubic equation (which corresponds to optimizing the two last summands in the second bracket in \eqref{equation.no.binding.condition.bipolaron}), the computations involved in the elimination are so cumbersome that we will just content ourselves with simplifying the minimization problem by rescaling $x_n \equiv 2L_n/t_{n - 1}$, which significantly reduces the numerical work involved. We are led to minimizing $F : (0, \infty)^{\mathbb{N}}\times (0, 1)^{\mathbb{N}} \to (0, \infty)$, defined as
\begin{align}
& \, F((b_0, b_1, b_2, \ldots), (x_1, x_2, x_3, \ldots))\nonumber\\
= & \,\left[2(b_0 + b_1) + \frac{\pi^2(b_0 + b_1)}{2b_1^2}\right]\vee\bigvee_{n = 1}^{\infty}\left[\frac{\pi^2 t_{n + 1}}{2\min(b_{n + 1}, b_n)^2} + \frac{\sqrt{2}t_{n + 1}}{t_{n - 1}}\left(\frac{1}{1 - x_n} + \frac{\sqrt{2}\pi^2}{t_{n - 1}x_n^2}\right)\right]\nonumber\\
\equiv & \bigvee_{n = 0}^{\infty}F_n.
\label{equation.definition.F}
\end{align}
The minimization of $F$, in principle an infinite-dimensional problem, is much simpler than it seems, since it can be reduced to a low-dimensional one. Indeed, consider the truncated function $F_0\vee F_1 : (0, \infty)^3 \times (0, 1) \to (0, \infty)$ given by the maximum of the first two terms, that is
\begin{align}
F_0\vee F_1((b_0, b_1, b_2), x) = \left[2(b_0 + b_1) + \frac{\pi^2(b_0 + b_1)}{2b_1^2}\right]\vee\left[\frac{\pi^2 t_{2}}{2\min(b_{2}, b_1)^2} + \frac{\sqrt{2}t_{2}}{t_{0}}\left(\frac{1}{1 - x} + \frac{\sqrt{2}\pi^2}{t_{0}x^2}\right)\right].
\end{align}
$F_0\vee F_1$ is now so simple that it can be minimized directly through numerical optimization. We get that its minimum is smaller than 25.9 when the parameters $b_0 = 7.27, b_1 = b_2 = 3.44, x = 0.702$ are chosen. If we now pick $b_n = (n - 1)b_2 = (n - 1)\times 3.44$ for $n \geq 3$ and $x_n = x = 0.702$ for all $n \geq 1$, it is easy to show that $F_{n + 1} \leq F_n$ for all $n \geq 1$. In this way we have found that the minimum of the expression \eqref{equation.definition.F} is less than 25.9. In order to compare this with previous results in \cite{FLST, BB}, we ought to multiply by $\sqrt{2}$, as in those works the Laplacian was not divided by 2, and by doing so we get a number smaller than 36.7. This is a significant improvement over the previous results of 52.1 \cite{BB} and 53.2 \cite{FLST}: a reduction of more than 29.6\%.

We would like to remark that the partitions considered here have been indeed fully optimized: one may think, for instance, of taking not bumps in the first partition, but pills (meaning functions that are part sine, part straight line, part cosine); however, it is easy to see that one gets the optimal answer when all the straight-line segments are collapsed into a point, except for the first one. This is why we considered just one pill and let the other functions be bumps instead.

We close the main body of the present article by briefly commenting on what happens in the massless Nelson model case. A family of lower bounds was provided for that model in \cite{B}, but here we will pick just one of them. (One could certainly refine the following argument, but here we are mostly interested in illustrating a point, and not so much in sharpness.) For $N = 2$, by picking $\theta = 3/2$, we get, from (2.36) in \cite{B},
\begin{align}
E_{\alpha}^{\Lambda, 2} + Q_{\alpha}^{\Lambda, 2} \geq - D_1\alpha^2 - D_2\alpha^8,
\end{align}
where $D_1$ and $D_2$ are positive constants independent of $\Lambda$, $E_{\alpha}^{\Lambda, 2}$ is the ground-state energy of the 2-particle massless Nelson model with cutoff $\Lambda$, and $Q_{\alpha}^{\Lambda, 2}$ is a renormalizing term, defined earlier, in expression \eqref{equation.renormalizing.term}, with $N = 2$. (The term $\alpha^8$ is most likely spurious -- see \cite{B}.) In any case, since we know that $E_{\alpha}^{\Lambda, 1} + Q_{\alpha}^{\Lambda, 1} \leq 0$, where $E_{\alpha}^{\Lambda, 1}$ and $Q_{\alpha}^{\Lambda, 1}$ are the 1-particle analogs of $E_{\alpha}^{\Lambda, 2}$ and $Q_{\alpha}^{\Lambda, 2}$ (see Appendix B), all the steps to conclude no-binding are identical to those for the piezoelectric polaron, except that now the condition reads as
\begin{gather}
A \geq \left[D_1(b_0 + b_1) + D_2(b_0 + b_1)\alpha^6 + \frac{\pi^2(b_0 + b_1)}{2b_1^2}\right]\vee\bigvee_{n = 1}^{\infty}\left[\frac{\pi^2 t_{n + 1}}{2\min(b_{n + 1}, b_n)^2} + \frac{8\pi^2 t_{n + 1}}{t_{n - 1} - 2L_n} + \frac{\pi^2 t_{n + 1}}{2L_n^2}\right]\alpha,
\label{equation.condition.no.binding.nelson}
\end{gather}
and so, as can be seen, even though a no-binding condition is obtained, it is not linear in $\alpha$. From discussions appearing in \cite{B}, $D_2$ can probably be set equal to zero. If that is the case, then indeed one would get a condition identical to that for the piezoelectric polaron (even better, since it would not diverge with $\Lambda$). One may as well say that, after all, $\alpha$ ought to be in a certain range $[0, \beta]$, and by bounding $\alpha \leq \beta$, one could eliminate the higher order term in $\alpha$, and this would lead to a linear relationship $A \geq C\alpha$, as expected.

As said earlier, this is just an illustration, and many other no-binding conditions may be obtained from the general bound \cite[Equation (2.36)]{B}. As a final comment, when we say that no-binding holds for both the renormalized and unrenormalized theories, we mean this: even though in principle we have proven no-binding only for the unrenormalized Nelson Hamiltonian, the corresponding result for the renormalized theory follows immediately, as $E_{\alpha}^{\Lambda, 2} = 2E_{\alpha}^{\Lambda, 1}$ implies
\begin{gather}
E_{\alpha}^{\Lambda, 2} + Q_{\alpha}^{\Lambda, 2} = 2\left(E_{\alpha}^{\Lambda, 1} + Q_{\alpha}^{\Lambda, 1}\right),
\label{equation.energy.renormalization.nelson.two.particles}
\end{gather}
and the no-binding result in this case is obtained by taking $\Lambda \to \infty$. (See the discussion preceding \cite[Equation (1.2)]{GM} for technical remarks concerning this last limiting step.)
\end{section}
\begin{appendix}
\setcounter{section}{1}
\setcounter{equation}{0}
\begin{section}*{Appendix A: The Clark-Ocone Formula and its Use in the Obtention of Lower Bounds in Non-relativistic QFT}
Throughout this work, we made use of lower bounds for the spectrum of each one of the models involved. These bounds were made possible thanks to a technique that was presented in an article written by the author and L.E. Thomas \cite{BT}, which we would like to briefly sketch here. The main idea goes along the following lines. First, we fix a model from non-relativistic quantum field theory (for example, any of the three mentioned in this paper). The number of particles involved or the addition of repulsion between them are not important factors here; the argument is fairly general. We let $H$ be the Hamiltonain of the model chosen, and $N$ be the number of particles. If one can integrate the quantum field variables, which is typically the case, one will be able to find a Feynman-Kac-like formula
\begin{gather}
\inf\text{spec } H \geq -\limsup_{T \to \infty}T^{-1}\log\left\{\sup_{x \in \mathbb{R}^{3N}}E^x\left[\exp(\mathcal{A}_T)\right]\right\},
\end{gather}
for some functional $\mathcal{A}_T$ of $3N$-dimensional Brownian paths on $[0, T]$. ($E^x$ means expectation with respect to Brownian motion starting at $x$.) The Clark-Ocone formula then states that
\begin{gather}
\mathcal{A}_T = E^x(\mathcal{A}_T) + \int_0^T\rho_t\, dX_t,
\end{gather}
for some $\mathbb{R}^{3N}$-valued stochastic process $\rho$, and the integral appearing is It\^{o}. A supermartingale estimate allows one then to get the following bound:
\begin{gather}
E^x\left[\exp(\mathcal{A}_T)\right] \leq \exp\left[E^x\left(\mathcal{A}_T\right)\right]E^x\left[\exp\left(\frac{p^2}{2(p - 1)}\int_0^T\rho_t^2\,dt\right)\right]^{1 - 1/p}
\label{equation.supermartingale.estimate}
\end{gather}
for all $p > 1$. Upper bounds on both terms inside exponentials on the right side of \eqref{equation.supermartingale.estimate}, with the right growth rate in $T$ (linear or sublinear), and uniform in $x$, allow one then to obtain a lower bound on the spectrum of $H$. This is the essence of the method.

We have intentionally skipped many technicalities in the description just given. For the full details, the reader is referred to references \cite{BT} (the paper of the author and Thomas), \cite{B} (a lower bound on the Nelson model), and \cite{B2} (the Ph.D. thesis of the author, where, in particular, additional discussions appear).
\end{section}
\begin{section}*{Appendix B: Proof of Upper and Lower Bounds for the Piezoelectric Polaron}
\label{appendix.upper.lower.bounds.piezoelectric.polaron}
In the present appendix we will prove the upper and lower bounds stated above, in Section \ref{section.partition}, for the ground-state energy of the $2$-electron piezoelectric polaron. We will actually do it for any number of particles $N$. It relies heavily on a recent paper of the author \cite{B}, in which lower bounds for the renormalized Nelson model were found, using functional-integral methods developed by the author and L.E. Thomas in \cite{BT}. We start from an expression, \cite[Equation (2.18)]{B}, whose time-integral from 0 to $T$ leads eventually to a lower bound on the ground-state energy of the Nelson model, when interpreted as an action. It is given by
\begin{gather}
256\pi^2\alpha^2\sum_{m = 1}^N\left(\sum_{n = 1}^N\mathcal{C}_{m, n}\right)^2,
\end{gather}
for some positive $\mathcal{C}_{m, n}$'s that can be bounded from above as
\begin{align}
\mathcal{C}_{m, n} \leq \, & \int_0^{\Lambda}\!\!\!\int_0^u\frac{1 - e^{-(r + r^2/2)(T - u)}}{1 + r/2}re^{-r(u - s)}|\varphi(r|X_u^m - X_s^n + x^m - x^n|)|\,ds\,dr\nonumber\\
& \, + \int_0^{\Lambda}\!\!\!\int_u^T\!\!\int_u^t e^{-r(t - s)}e^{-r^2(t - u)/2}e^{-r^2(s - u)/2}|\varphi(r|X_u^m - X_u^n + x^m - x^n|)|r^2\,ds\,dt\,dr\nonumber\\
\equiv \, & \mathcal{D}_{m, n} + \mathcal{E}_{m, n},
\end{align}
where the vectors $X^n$ are independent 3D Brownian motions, and $\varphi(x)$ is the function $(\sin x - x\cos x)/x^2$. It was shown in \cite[Lemma 2.2]{B} that $\mathcal{E}_{m, n}$ is bounded above by $2^{-1}(1 - \delta_{nm})\|\varphi(x)/x\|_1$. A crude estimate on $\mathcal{D}$ allows us to bound it from above as
\begin{align}
& \int_0^{\Lambda}\!\!\!\int_0^u\frac{re^{-r(u - s)}}{1 + r/2}|\varphi(r|X_u^m - X_s^n + x^m - x^n|)|\,ds\,dr \leq \|\varphi\|_{\infty}\int_0^{\Lambda}\!\!\!\int_0^u\frac{re^{-r(u - s)}}{1 + r/2}\,ds\,dr\nonumber\\
\leq & \, \|\varphi\|_{\infty}\int_0^{\Lambda}\frac{dr}{1 + r/2} = 2\|\varphi\|_{\infty}\log(1 + \Lambda/2).
\end{align}
$\mathcal{C}_{m, n}$ is then bounded above by a logarithmically diverging function of $\Lambda$. It follows then from the analysis in \cite[Section 2]{B} that the ground-state energy of the piezo-electric polaron is bounded below by
\begin{align}
& -8\pi\alpha N\log(1 + \Lambda/2) - 32\pi^2\alpha^2 N^3\left[\|\varphi(x)/x\|_1 + 4\|\varphi\|_{\infty}\log(1 + \Lambda/2)\right]^2\nonumber\\
\equiv & -C_1(\Lambda)\alpha N - C_2(\Lambda)\alpha^2 N^3.
\label{definition.C1}
\end{align}

An upper bound for the 1-particle piezoelectric polaron, good enough for our purposes here, follows immediately from certain simplifications. First, we note that in the 1-particle case the starting and ending position of the Brownian path $\omega$ does not appear in the action
\begin{gather}
\alpha\int\!\!\!\int_0^T\!\!\!\int_0^t\chi_{\Lambda}(k)e^{-|k|(t - s)}e^{-ik(\omega_t - \omega_s)}|k|^{-1}\,ds\,dt\,dk
\end{gather}
(see Equation \eqref{equation.feynman.kac.exact.piezoelectric.polaron}), due to a cancellation in the difference of the Brownian path evaluated at different times. This fact allows us to write the ground-state energy of the 1-particle piezo-electric polaron as
\begin{gather}
-\lim_{T \to \infty}T^{-1}\log\left[\int\exp\left(\alpha\int\!\!\!\int_0^T\!\!\!\int_0^t\chi_{\Lambda}(k)e^{-|k|(t - s)}e^{-ik(\omega_t - \omega_s)}|k|^{-1}\,ds\,dt\,dk\right)dW_{0, 0}^T(\omega)\right].
\label{equation.piezoelectric.polaron.one.electron}
\end{gather}
Equation \eqref{equation.piezoelectric.polaron.one.electron} can be derived by adapting the analysis leading to \cite[Equation (5.3.48)]{R} to the piezoelectric polaron. Furthermore, by following the arguments in \cite[Section 3]{DV}, and making the necessary changes for the piezoelectric polaron, Brownian motion tied to 0 at both ends may be replaced by standard Brownian motion starting at 0, since when $T$ is large the terminal condition for a Brownian path $\omega$ is essentially irrelevant. With these simplifications, we obtain that the ground-state energy of the piezoelectric polaron can be written as
\begin{gather}
-\lim_{T \to \infty}T^{-1}\log\left\{E\left[\exp\left(\alpha\int\!\!\!\int_0^T\!\!\!\int_0^t\chi_{\Lambda}(k)e^{-|k|(t - s)}e^{-ik(X_t - X_s)}|k|^{-1}\,ds\,dt\,dk\right)\right]\right\}.
\label{equation.piezoelectric.polaron.one.electron.simplified}
\end{gather}
Now, from Jensen's inequality, we get that
\begin{align}
& E\left[\exp\left(\alpha\int\!\!\!\int_0^T\!\!\!\int_0^t\chi_{\Lambda}(k)e^{-|k|(t - s)}e^{-ik(X_t - X_s)}|k|^{-1}\,ds\,dt\,dk\right)\right]\nonumber\\
\geq & \exp\left(\alpha\int\!\!\!\int_0^T\!\!\!\int_0^t\chi_{\Lambda}(k)e^{-|k|(t - s)}|k|^{-1}E\left(e^{-ik(X_t - X_s)}\right)\,ds\,dt\,dk\right)\nonumber\\
= & \exp\left(\alpha\int\!\!\!\int_0^T\!\!\!\int_0^t\chi_{\Lambda}(k)e^{-|k|(t - s)}|k|^{-1}e^{-k^2(t - s)/2}ds\,dt\,dk\right)\nonumber\\
= & \exp\left[C_1(\Lambda)\alpha T + o(T)\right],
\end{align}
from which the upper bound $-C_1(\Lambda)\alpha$ for the piezoelectric polaron is obtained. (The last equality follows from Equation \eqref{definition.C1}.) Both inequalities, the lower bound \eqref{definition.C1} for $N = 1$ and the upper bound $-C_1(\Lambda)\alpha$ we just derived, agree for small $\alpha$ at the result one gets from second-order perturbation theory \cite{WGT}. Our lower bound \eqref{definition.C1} for $N = 1$ has the advantage over a previous lower bound for the piezoelectric polaron in \cite{TW} that an explicit answer is obtained, valid for all values of $\alpha$. In \cite{TW} the lower bound involves quantities that simplify only in limiting regimes of $\alpha$.

We would like to finish this appendix by pointing out a curious fact: One of the lower bounds for the piezoelectric polaron in \cite{TW} contains a term proportional to $\alpha\log\alpha$ that is not divergent in $\Lambda$, which is obtained under certain assumptions on $\alpha$. A term just like that was obtained in a lower bound in \cite{B} under a certain regime of $\alpha$, but squared. This seems to us more like a coincidence than an actual connection, since, as explained in \cite{B}, our logarithmically divergent term in $\alpha$ is probably not there really.
\end{section}
\setcounter{section}{2}
\setcounter{equation}{0}
\begin{section}*{\noindent Appendix C: Note on the Retarded Nature of the Polaron and Nelson Actions}
\label{appendix.second}
We would like to close the present article by addressing two mistakes made in Chapter 5 of the Ph.D. thesis of the author \cite{B2}. As the present article is based partly on that chapter from the thesis, we think it is relevant to resolve the problems here. First, in \cite[Section 5.1]{B2} we localized the six-dimensional vector $(x, y)$, representing the positions of two electrons in 3-space, in the region $\Omega \equiv \left\{(x, y) : |x - y| \geq d\right\}$. This localization should have the effect of keeping the two electrons far from each other; however, this is not what actually happens, at least not if the electrons interact with each other through their coupling to a polar crystal. The point is that the electrons do not attract each other instantaneously; rather, they attract their entire past histories in space, with different weights, with the remote past being less relevant than the recent times. The problem then is that merely localizing $(x, y)$ in $\Omega$ will mean that a Brownian path $(X_t, Y_t)$, representing a potential trajectory of the particles, satisfies $|X_t - Y_t| \geq d$; but it will not necessarily mean that $|X_t - Y_s| \geq d$ for all $t$ and $s$. The mistake made in the thesis was to assume that $|X_t - Y_s| \geq d$ held for all $t$ and $s$ just by localizing in $\Omega$. One further localization was missing, and is the one we added in this article, where we pinned one of the electrons, in which case the entire trajectories of the particles are now well-separated.

The other mistake was to neglect the factor $1/2$ in front of the localization error, arising in the IMS formula in \cite[Equation (5.5)]{B2}, due to the use of the operator $p^2/2$, instead of $p^2$. The errors combined yield a no-binding condition for the optical bipolaron given by $A \geq 17.8\alpha$, which, even though might well be true, was derived using an argument that was not completely correct, as we have just seen.
\end{section}
\end{appendix}

\end{document}